\documentclass[review]{elsarticle}

\usepackage{lineno,hyperref}
\modulolinenumbers[5]

\journal{Journal of \LaTeX\ Templates}
\usepackage{amssymb, amsfonts,amsmath, amsthm}
\usepackage{graphicx}
\usepackage{xcolor}
\usepackage{upgreek}
\usepackage{dsfont}
\usepackage{verbatim}
\usepackage{enumitem}
\usepackage{subcaption}


\newtheorem{theorem}{Theorem}[section]
\newtheorem{lemma}[theorem]{Lemma}




\begin{document}

\begin{frontmatter}

\title{Prevalence Estimation and Optimal Classification Methods to Account for Time Dependence in Antibody Levels}



\author[mymainaddress,mysecondaryaddress]{Prajakta Bedekar\corref{mycorrespondingauthor}}
\cortext[mycorrespondingauthor]{Corresponding author}
\ead{pbedeka1@jh.edu}
\author[mymainaddress]{Anthony J Kearsley}

\author[mymainaddress]{Paul N Patrone}

\address[mymainaddress]{Applied and Computational Mathematics Division, National Institute of Standards and Technology, 100 Bureau Drive, Gaithersburg, MD 20899, USA}
\address[mysecondaryaddress]{Department of Applied Mathematics and Statistics, Johns Hopkins University, 3400 N. Charles Street, Baltimore, MD 21218, USA}

\begin{abstract}
Serology testing can identify past infection by quantifying the immune response of an infected individual providing important public health guidance. Individual immune responses are time-dependent, which is reflected in antibody measurements. Moreover, the probability of obtaining a particular measurement changes due to prevalence as the disease progresses. Taking into account these personal and population-level effects, we develop a mathematical model that suggests a natural adaptive scheme for estimating prevalence as a function of time. We then combine the estimated prevalence with optimal decision theory to develop a time-dependent probabilistic classification scheme that minimizes error. We validate this analysis by using a combination of real-world and synthetic SARS-CoV-2 data and discuss the type of longitudinal studies needed to execute this scheme in real-world settings.
\end{abstract}

\begin{keyword}
Time-Dependent Classification \sep SARS-CoV-2 \sep Prevalence Estimation \sep Optimization
\sep  Antibody Testing
\end{keyword}

\end{frontmatter}



\section{Introduction}

Antibody tests are a powerful tool in the study of virus propagation throughout a population. They are useful for prevalence estimation (Bendavid et al., 2021 \cite{bendavid2021covid}, Burgess et al., 2020 \cite{burgess2020we}, Jacobsen et al., 2010 \cite{jacobsen2010hepatitis}), thus guiding public health decisions (Caini et al., 2020 \cite{caini2020meta}, Peeling et al., 2020\ \cite{peeling2020serology}). Antibody tests can also characterize immune response of an infected or vaccinated individual. Importantly, a classification scheme is needed to interpret raw measurement data and thereby determine whether a sample is deemed positive or negative.

A fundamental problem not addressed by traditional classification schemes is the multitude of timescales inherent to antibody kinetics. For example, the antibody levels in an individual vary on a `personal timeline'; after an exposure to a virus, titers peak following an initial delay and then slowly decay (Jin et al., 2020 \cite{jin2020diagnostic}, Sethuraman et al., 2020\ \cite{sethuraman2020interpreting}, Zhao et al., 2020 \cite{zhao2020antibody}). This behavior is person specific (Aydillo et al., 2021 \cite{aydillo2021immunological}, Dispinseri et al., 2021 \cite{dispinseri2021neutralizing}, Zhang et al., 2020 \cite{zhang2020viral}). Moreover, the number of infected people in the population changes with time (Dong et al., 2020 \cite{dong2020interactive}), which occurs on an `absolute timeline'. Traditional methods do not take these effects and their interactions into account, potentially misclassifying many samples due to a static training dataset and subsequent classification boundary. This can lead to inaccurate estimation of prevalence.

To address this problem, we develop a modeling approach that explicitly accounts for the relationship between time-dependent changes in antibody levels due to both individual- and population-level effects. This is done by recognizing that change in prevalence of the disease dictates what fraction of the population start personal timelines on a given day. As these models depend on prevalence, we develop an unbiased estimator for this quantity as a function of time. With this estimate, we construct an optimal classification scheme that minimizes the prevalence weighted average of false positive and false negative errors defined over time. We demonstrate the effectiveness of this adaptive scheme using a publicly available dataset representing SARS-CoV-2 antibody measurements.

A key challenge in developing time-dependent probability models is that the effects of personal timeline are convolved with those of the absolute timeline. Individual response probabilities are independent of the progress of a pandemic in the population, but the conditional probability of a sample on a given day being positive depends on the prevalence in the population. Prevalence is not known a priori, but this close relation to the conditional probabilities is leveraged to develop an unbiased estimator as a function of time. Interestingly, this estimator does not rely on classification, thereby allowing us to deconvolve prevalence from any test data.

One counterintuitive takeaway from our work is that the classification of samples depends on when it was collected, and any two samples from two different points in time could be classified differently, even if they have the same antibody measurements. Previous works (Patrone and Kearsley, 2021 \cite{patrone2021classification}) foreshadowed that changing prevalence leads to a variation in classification domains. However we also show that even with fixed prevalence, optimal classification domains can vary. 

A limitation of our approach is that we have primarily considered time to be discrete, as epidemiological data is often reported once per day. A continuous analog of these models is possible and discussed in detail in subsection \ref{subsec:cts}. Because of reporting constraints, delays in testing, and measurement error in the date of symptom onset, data sometimes has jumps. Changing the size of time discretization can help address these issues. See subsection \ref{subsec:optsynth} for further discussion.

\section{Methods: Theory of Time Dependent Classification}
\label{sec:theory}

Serology testing measures the quantities of $n$ antibody targets in a sample obtained from an individual, with results reported as fluorescence measurements. We represent such a measurement as a vector, $\textbf{r} = (r_1,r_2,\cdots,r_n)$, where the variables $r_i$ denote the values of antibody targets $i$, $i \in \{1,\cdots,n\}$.  These measurements fall in a domain $\Omega \subset \mathbb{R}^n$. Due to the limits of detection for the instruments, $\Omega$ is typically of the form $[l_1,u_1]\times \cdots, \times [l_n,u_n]$, where $l_i$ and $u_i$ are lower and upper limits of detection for target $i$.

\subsection{Probability Models}
\label{subsec:probmodel}
A sample value $\textbf{r}$ can arise from either a true positive or negative sample on a given day $T$ of a pandemic. We use $N(\mathbf{r})$ to denote the conditional probability density that a sample yields measurement $\mathbf{r}$ given that it is negative. Similarly, $P(\mathbf{r},T)$ is the conditional probability density that a sample yields measurement $\mathbf{r}$ on day $T$ of a pandemic given that it is positive. Given the aforementioned sources of time dependence, we assume
\begin{equation}
P: \Omega  \times \Omega_T \rightarrow \mathbb{R} \text{ and } N: \Omega \rightarrow \mathbb{R}.
\end{equation}

Ideally, $T=0$ is the day of the start of a pandemic, i.e.\ the first patient's exposure. It is extremely difficult to quantify this. As a result, in practice the personal and absolute timelines are both expressed in terms of the days since symptom onset. This issue is further considered in subsection \ref{subsec:modelingchoices}.

The conditional probability density for positive samples depends on time whereas for negative samples it does not. The following example helps clarify why. Consider an $\mathbf{r}$ close to $(u_1,u_2,\cdots,u_n)$ at $T=0$, i.e.\ a high antibody measurement before a pandemic has begun. The conditional probability $P$ of someone being infected is zero as there is no virus. It is likely that the value is merely due to nonspecific binding to antibodies for other viruses with a similar structure. On the other hand, if some people in the population are infected, this measurement could have come from an infected individual and thus should be classified as positive.

The positive samples on day $T$ of a pandemic come from those individuals who have been infected on some day before $T$. (For now, we set aside the question of antibody levels due to vaccine-induced immunity.) This sample space is thus partitioned into subsets based on how many days ago the individual was infected, i.e, the set of individuals who are on day $t$ after their day of infection, with $t = 0, 1, \cdots, T$. The law of total probability implies 
\begin{equation}
P(\mathbf{r},T) = \sum\limits_{t=0}^T Prob(\mathbf{r},T, \text{day t of infection}),
\end{equation}
where $Prob(\mathbf{r},T, \text{day t of infection})$ is the probability that a measurement $\mathbf{r}$ is obtained on day $T$ of a pandemic for an individual who was infected $t$ days ago. By using the multiplication rule (Ross,2010 \cite{ross2010first}) we see that the summand is the product of the probability that the antibody response for day $t$ of the infection is $\mathbf{r}$, and the fraction of population on day $t$ of their infection out of all positive individuals. Denoting the antibody response by $R$ and fraction of true positive individuals infected $t$ days ago as $f(T-t)$, one finds, 

\begin{equation}\label{eq:PrT}
P(\textbf{r},T) = \sum\limits_{t=0}^{T} R(\textbf{r},t) \ \frac{f(T-t)}{q(T)},
\end{equation}
where $R(\textbf{r},t)$ is the conditional probability of a sample having antibody measurement $\textbf{r}$ given that it is a positive sample, with the individual being infected $t$ days ago. The function $f(t)$ denotes the fraction of the population infected on day $t$ and $q(T) =  \sum\limits_{t=0}^{T} f(t)$ is the prevalence on day $T$ of a pandemic's timeline.

In practice, $R$ and $N$ are modeled from training data. It is important to choose a suitable family of probability distributions so that it closely models the disease progression. Guided by the limiting behavior of the antibody kinetics, the distributions for $R(.,0)$ and $\lim\limits_{t\rightarrow \infty} R(.,t)$ should be the same as that of $N(.)$ (Borremans, 2020 \cite{borremans2020quantifying}). Once such a form is determined, methods such as maximum likelihood estimates can be used to find the associated parameters. Section \ref{sec:COVID} illustrates this analysis in practice.

The probability $Q(\mathbf{r},T)$ that a sample collected on day $T$ has antibody level $\mathbf{r}$ can be found using $P(\mathbf{r},T), N(\mathbf{r}),$ and $q(T)$ via law of total probability,
\begin{equation}
Q(\mathbf{r},T) = q(T)\cdot P(\mathbf{r},T) + (1-q(T)) \cdot N(\mathbf{r}).
\end{equation}
Using change of variables and the fact that distributions of $N(\cdot)$ and $R(\cdot,0)$ are equal, we note for later convenience,
\begin{align} \label{eq:totprob}
Q(\mathbf{r},T) & = q(T)\cdot \sum\limits_{t=0}^T R(\mathbf{r},t) \frac{f(T-t)}{q(T)}\ + (1-q(T)) \cdot N(\mathbf{r})\\
			 & = N(\mathbf{r}) + \sum\limits_{t=0}^{T-1}  f(t) \cdot \left(R(\mathbf{r},T-t) - N(\mathbf{r})\right).
\end{align}

It is important to note that the domain for time, $\Omega_T$ can be taken to be continuous, i.e.\ $[0,T]$ and a corresponding model is outlined in subsection \ref{subsec:cts}. However, it is not possible to collect data for such a domain. In practice, data is collected and reported every day, leading to our choice of $\Omega_T = \{ 0, 1, \cdots, T\}$, a discrete set of days. Sometimes data is not available even at this granularity and coarser periods like weeks can be used. We discuss the issues associated with the discretization choice in subsection \ref{subsec:optsynth}.

\subsection{Estimation of Time-Dependent Prevalence}
\label{subsec:qest}

We develop a scheme for estimation of time-dependent prevalence based solely on the sample values measured over time. Provided that we have already modeled the probability distributions, this scheme does not require that samples be classified, i.e.\ prevalence can (and should) be estimated before constructing optimal classification domains. 

For an arbitrary partition of the domain $\Omega$ into $D_P$ and $D_N$, we define $Q_P$ at time $T$ by integrating both sides of total probability $Q$ from equation \eqref{eq:totprob} over $D_P$. This yields the total probability mass of $Q$ inside $D_P$,
\begin{equation}\label{eq:QP} Q_P(T) = \int_{D_P} Q(\mathbf{r},T) d\mathbf{r} = N_P + \sum\limits_{t=0}^{T-1}  f(t) \left(R_P(T-t) - N_P\right),
\end{equation}
where $R_P(T-t) = \int_{D_P} R(\mathbf{r},T-t) d\mathbf{r}$ and $N_P= \int_{D_P} N(\mathbf{r}) d\mathbf{r}$. Note that by assumption, $R_P$ and $N_P$ are calculated exactly. Furthermore, as the distribution for $R(\cdot,0)$ and $N(\cdot)$  are assumed to be identical, (subsection \ref{subsec:probmodel}), there is no term corresponding to $t=T$ in this sum.

Here, $Q_P(T)$ is approximated by a Monte-Carlo estimate
\begin{equation}\label{eq:QPestimate}
Q_P(T) \approx \widehat{Q}_P(T) = \frac{1}{S} \sum\limits_{i=1}^S \mathds{1}_P(\mathbf{r_i}),
\end{equation}
where $\mathds{1}_P$ is the indicator function of $D_P$ and $\{ \mathbf{r_1}, \mathbf{r_2}, \cdots, \mathbf{r_S} \}$ is the set of sample values observed on day $T$. Notice that $\widehat{Q}_P(T)$  is an unbiased estimator of $Q_P(T)$ (Caflisch,1998 \cite{caflisch1998monte}). 

Using $Q_P(1)$ and equation \eqref{eq:QP},
\begin{equation}
f(0) = \frac{Q_P(1)-N_P}{R_P(1)-N_P},
\end{equation}
we obtain the following estimator,
\begin{equation} \label{eq:f0est}
f(0) \approx \widehat{f}(0) = \frac{\widehat{Q}_P(1)-N_P}{R_P(1)-N_P}.
\end{equation}

Observe that due to the realistic assumption made in subsection \ref{subsec:probmodel}, $R_P(0) = N_P$, i.e., it is not possible to distinguish between a positive and a negative sample on the day of the infection. Thus, the number of people infected at the start of a pandemic ($T=0$) are quantified in terms of their deviation from the negative samples on the next day.

Similarly, by repeated application of \eqref{eq:QP}, \eqref{eq:QPestimate}, and \eqref{eq:f0est}, $f(T-1)$ is estimated in terms of the previous time points through the recurrence relation 
\begin{equation}\label{eq:fhatrec}
\widehat{f}(T-1) = \frac{\widehat{Q}_{P}(T) - N_{P}}{R_{P}(1)-N_{P}} - \sum\limits_{t=0}^{T-2} \widehat{f}(t) \frac{R_{P}(T-t)-N_{P}}{R_{P}(1)-N_{P}}.
\end{equation}

We set
\begin{equation}\beta_P(i) = \frac{\widehat{Q}_P(i) - N_P}{R_P(1)-N_P} \quad \text{and} \quad \alpha_P(i) = \frac{R_P(i)-N_P}{R_P(1)-N_P}.\end{equation}
Then, \eqref{eq:fhatrec} becomes
\begin{equation}
\widehat{f}(T-1) = \beta_P(T) - \sum\limits_{t=0}^{T-2} \widehat{f}(t) \alpha_P(T-t).
\end{equation}

Note that this recurrence is defined for all $T \in \{1,2,\cdots\}$. Moreover, provided we use the same domain $D_P$ for all $P$, these equations for all $t$ up to $T-1$ yields the matrix system
\begin{equation}
\begin{bmatrix}
1 & 0 & 0 & \dots & 0 \\
\alpha_P(2) & 1 & 0 & \dots & 0\\
\alpha_P(3) & \alpha_P(2) & 1 & \vdots & 0\\
\vdots & \vdots & \ddots & \vdots & 0 \\
\alpha_P(T) & \alpha_P(T-1) & \dots & \dots & 1
\end{bmatrix} \begin{bmatrix}
\widehat{f}(0) \\ \widehat{f}(1) \\ \vdots \\ \vdots \\ \widehat{f}(T-1)
\end{bmatrix}  = \begin{bmatrix}
\beta_P(1) \\ \beta_P(2) \\ \vdots \\ \vdots \\ \beta_P(T)
\end{bmatrix}.\end{equation}

This matrix system is lower triangular and invertible, and the condition number for the matrix may be high depending upon the choice of $dt$. The system itself can still be solved easily through backward substitution without explicit matrix inversion,
\begin{align} \label{eq:festimate0}
\widehat{f}(0) & = \beta_P(1),\\
\widehat{f}(1) & = \beta_P(2) - \alpha_P(2) \beta_P(1),\\
\widehat{f}(2) & = \beta_P(3) - \alpha_P(3) \beta_P(1) - \alpha_P(2) \beta_P(2),
\end{align}
and so on. The prevalence at time $T$ is then estimated as 
\begin{equation}\label{eq:qestimate}
\widehat{q}(T) = \sum\limits_{t=0}^T \widehat{f}(t).
\end{equation} 
These estimators are unbiased which is proven in the Appendix. Note that this property holds irrespective of the selection of $D_P$. However, the variance of the estimators increases if the probability mass of $D_P$ is too close to $0$ or $1$, similar to the phenomenon reported in Patrone and Kearsley, 2022 \cite{patrone2022minimizing}. Therefore, it is prudent in practice to select $D_P$ carefully to obtain rapid convergence. As a rule of thumb, choosing a rectilinear domain $D_P$ containing $30-70~\%$ of samples is a good starting point. 

\subsection{Optimal Classification Scheme}
\label{subsec:optimal}
We now develop the optimal classification scheme for sample measurements. The aim is to find a sequence of sets $D_P(T)$ and $D_N(T)$ which optimally partition the domain $\Omega$ at time $T$. A measurement $\mathbf{r}$ is classified as positive or negative on day $T$ based on the subset into which this measurement falls.

We utilize our framework to define Borel probability measures (Billingsley, 2008 \cite{billingsley2008probability}) $\mu_P$ and $\mu_N$ which evaluate how much of the probability mass of $P$ and $N$ respectively lies in the set $X \subset \Omega$, 

\begin{equation}
\mu_P(X) = \int_X P(\mathbf{r})\ d\mathbf{r},
\end{equation}

\begin{equation}
\mu_N(X) = \int_X N(\mathbf{r})\ d\mathbf{r}.
\end{equation}

We must ensure that the sequence of sets partition $\Omega$ under $\mu_P$ and $\mu_N$. That is,
\begin{eqnarray}
\mu_P\left( D_P(T) \cup D_N(T) \right) & = \mu_N\left( D_P(T) \cup D_N(T) \right) & = 1,\\
\mu_P\left( D_P(T) \cap D_N(T) \right) & = \mu_N\left( D_P(T) \cap D_N(T) \right) & = 0.
\end{eqnarray}
At a fixed time $T$, if $q(T)$ is known or can be estimated, we can define a loss function as the sum of prevalence weighted average rates of false positives and false negatives as a function of classification domains, 
\begin{equation}\label{eq:ptwiseerr}
\mathcal{L}\left(D_P(T),D_N(T)\right) = \left(1-q(T)\right) \int_{D_P(T)} N(\mathbf{r}) d\mathbf{r} + q(T) \int_{D_N(T)} P(\mathbf{r},T) d\mathbf{r}. 
\end{equation}

A loss function associated with $\Omega_\tau$ is then defined as 

\begin{equation}
\mathcal{L}_\tau \left(\mathbf{D_P}, \mathbf{D_N}\right) = \sum\limits_{T=0}^\tau \mathcal{L}(D_P(T),D_N(T))
\end{equation} 
where $D_P(T)$ and $D_N(T)$ partition the domain $\Omega$ at time $T$, and 
\begin{equation}\mathbf{D_P} = \begin{bmatrix} D_P(0) \\ D_P(1)\\ \vdots \\D_P(\tau)\end{bmatrix}, \quad \mathbf{D_N} = \begin{bmatrix} D_N(0) \\ D_N(1)\\ \vdots \\D_N(\tau)\end{bmatrix}. 
\end{equation}

We use the pointwise loss function \eqref{eq:ptwiseerr} to determine the optimal classification domains $D_P^*(T)$ and $D_N^*(T)$ as an application of Patrone and Kearsley, 2021 \cite{patrone2021classification}. Intuitively, the optimal domain for positive samples is the set where the prevalence weighted probability of the sample being positive with a value $\mathbf{r}$ is larger than that of it being negative. Assuming the boundary set, i.e.\ $\{\mathbf{r}: q(T) P(\mathbf{r}) = \left(1-q(T)\right) N(\mathbf{r})\}$ has measure zero, the optimal domains are

\begin{equation}
D_P^*(T) = \left\{ \mathbf{r} : q(T) P(\mathbf{r},T) > \left(1-q(T)\right) N(\mathbf{r})\right\},
\end{equation}

\begin{equation}
D_N^*(T) = \left\{ \mathbf{r} : \left(1-q(T)\right) N(\mathbf{r}) > q(T) P(\mathbf{r},T) \right\}.
\end{equation}
 
Now, we employ the pointwise optimality at every $T$  to say that for any $\mathbf{D_P}, \mathbf{D_N},$ 
\begin{eqnarray}
\mathcal{L}_\tau \left(\mathbf{D_P}, \mathbf{D_N}\right) & = & \sum\limits_{T=0}^\tau \mathcal{L}(D_P(T),D_N(T))\\
& \geq & \sum\limits_{T=0}^\tau \mathcal{L}(D_P^*(T),D_N^*(T)) = \mathcal{L}_\tau \left(\mathbf{D_P^*}, \mathbf{D_N^*}\right).
\end{eqnarray} 

Thus $\mathbf{D_P^*}, \mathbf{D_N^*}$  defined below are the vectors of optimal classification sets which partition the domain up to $T=\tau$,
\begin{equation}\mathbf{D_P^*} = \begin{bmatrix} D_P^*(0) \\ D_P^*(1)\\ \vdots \\D_P^*(\tau)\end{bmatrix}, \quad \mathbf{D_N^*} = \begin{bmatrix} D_N^*(0) \\ D_N^*(1)\\ \vdots \\D_N^*(\tau)\end{bmatrix}. 
\end{equation}

\section{Results of a Study with COVID Data}
\label{sec:COVID}
As a proof of concept for ideas developed in previous sections, we implement this time-dependent classification scheme on clinical data. We use publicly available dataset associated with Abela et al., 2021 \cite{abela2021multifactorial} which provides antibody measurements for PCR positive individuals along with the days since symptom onset. We use the total SARS-CoV-2 IgG antibody values in MFI-FOE (median fluorescence intensity-fold over empty beads) units as our variable $r$ under consideration. We use one-dimensional $r$ to highlight the effect of time dependence, but our analysis is applicable to data of arbitrary dimensions. See Luke et al., 2022\ \cite{luke2022modeling} and Patrone et al., 2022 \cite{patrone2022optimal}  for additional examples of modeling probability densities for multi-dimensional data. 

The total SARS-CoV-2 IgG data is transformed by using the following logarithmic transform which puts the data on a scale of bits associated with the measurement,
\begin{equation} \label{eq:transform}
Tr(x) = \log_2(x+2)-1.
\end{equation}
Transformed training data for negative individuals is shown as a histogram in Figure \ref{fig:dataneg} and that for positive  individuals plotted against the days since symptom onset is shown in Figure \ref{fig:datapos}.

As the training data does not report corresponding days in the absolute timeline, it is used to only model the probability density functions for antibody response $t$ days after infection. We use gamma distributions to model both the positive response with changing time and the negative distribution (Frank, 2009 \cite{frank2009common}). 

For the negative samples, we use the pre-pandemic measurements from Abela et al., 2021 \cite{abela2021multifactorial} and assume the density function,
\begin{equation} \label{eq:negpdf}
N(r) = \frac{r^{a-1} e^{-r/b}}{\gamma(a) \ b^{a}}.
\end{equation}
Maximum likelihood estimation yields the values of the shape and scale parameters as $a = 17.5825, \ b = 0.1233.$ A histogram for the data and the corresponding probability density function are plotted in Figure \ref{fig:dataneg}.

\begin{figure}
\includegraphics[width = \textwidth]{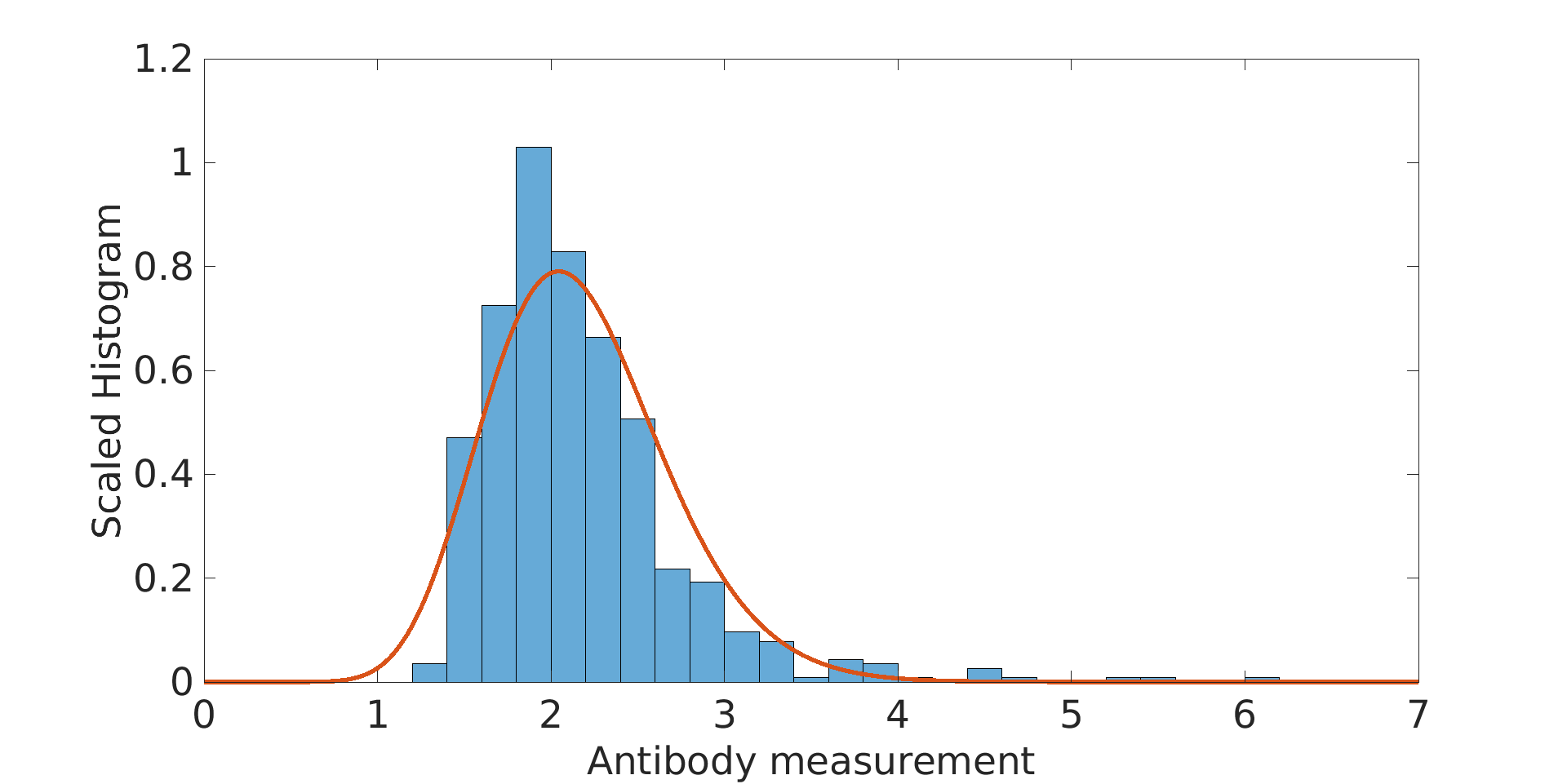}
\caption{The histogram in blue corresponds to transformed training data for the negative samples. The red curve is the probability density function modeled from this data. Refer to equation \eqref{eq:negpdf} in the text for more details.}
\label{fig:dataneg}
\end{figure}

\begin{figure}
\includegraphics[width = \textwidth]{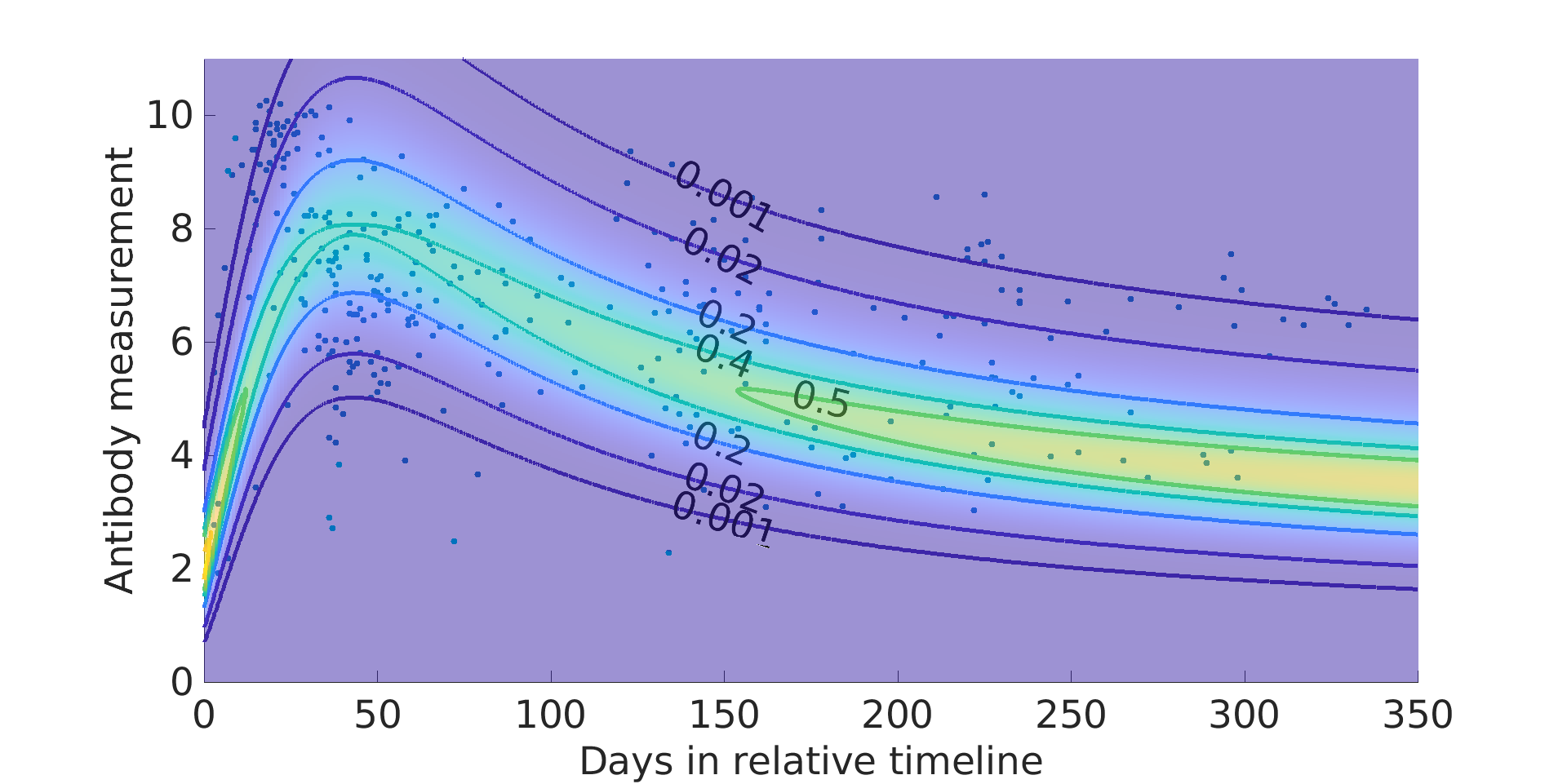}
\caption{Training data for positive samples obtained from Abela et al., 2021 \cite{abela2021multifactorial}. The blue circles represent the positive sample values  after the transformation from equation \eqref{eq:transform} at day t in their personal timeline. The contours show the probability density function modeled this data. Refer to equation \eqref{eq:pospdf} in the main text for more details.}
\label{fig:datapos}
\end{figure}

The antibody data for positive samples is generally scarce for the first few days after infection. To account for this, we use the fact that the antibody levels for those recently infected individual resembles the antibody levels for the uninfected. We supplement the positive samples with twenty negative samples, for which we fix $t=0$.

To take the limiting behavior of antibody kinetics into account, we impose the additional realistic restriction that antibody response for a person infected today ($t=0$) is identical to $N(r)$. We thus model the scale of the gamma distribution for positive samples as a constant independent of time, equal to the scale for the negative samples ($b(t) = b$), whereas the shape of the gamma distribution depends on $t$ in the following way:
\begin{equation}
a(t) = \frac{\theta_1 t}{1+\left(\theta_2 t^2\right)} + a.
\end{equation}
This reflects the known underlying profile in an individual where the antibody levels increase after an infection and then decay slowly over time. The parameters are obtained using maximum likelihood estimates of the training data,
\begin{equation} \label{eq:pospdf}
R(r,t) = \frac{r^{a(t)-1} e^{-r/b}}{\gamma(a(t)) \ b^{a(t)}},
\end{equation}
with \begin{equation}\theta_1 = 2.2251, \ \theta_2 = 0.0005.\end{equation}
The scatterplot for the data and the contours for probability density function obtained is plotted in Figure \ref{fig:datapos}.

\subsection{Prevalence Estimation with Synthetic Data}

We next demonstrate the behavior of the estimators for $f$ and $q$, using synthetic data generated from probability models in  \eqref{eq:negpdf}, \eqref{eq:pospdf}, along with an assumed prevalence. The prevalence is then estimated using the scheme developed in subsection \ref{subsec:qest} and compared with the true values.

Discretization for time is chosen as $dt = 14$ so that $10$ such time periods are $140$ days in the absolute timeline. A $1000$ sets of synthetic data are then generated for underlying known $f$ (constant and sinusoidal) over this time. The mean and variance of the prevalence estimates over these multiple sets of synthetic data with $N_s$ sample points in each time interval are discussed below.

\begin{figure}
\includegraphics[width = \textwidth]{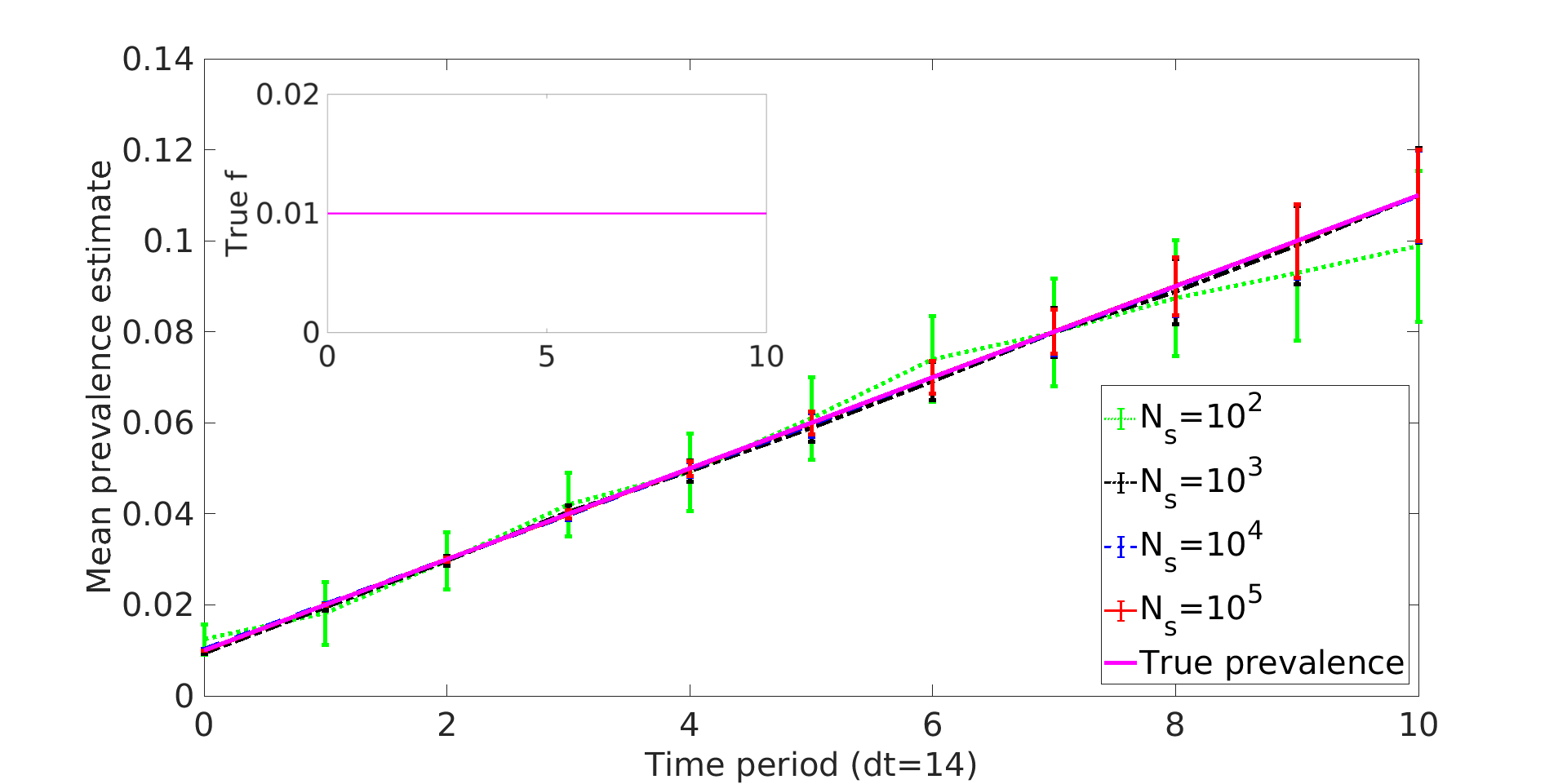}
\caption{Mean over 1000 synthetic sets of prevalence estimates. The inset shows the underlying true constant $f$, defined by \eqref{eq:fconst}. The mean prevalence estimates for various $N_s$ number of samples per time period are plotted with variance errorbars over time: $N_s=100$ (Green, dotted), $N_s = 1000$ (Black, dotted-dashed), $N_s =10^4$ (Blue, dashed), $N_s = 10^5$ (Red, solid), True prevalence (Magenta).}
\label{fig:constant_mean}
\end{figure}

\begin{figure}
\includegraphics[width = \textwidth]{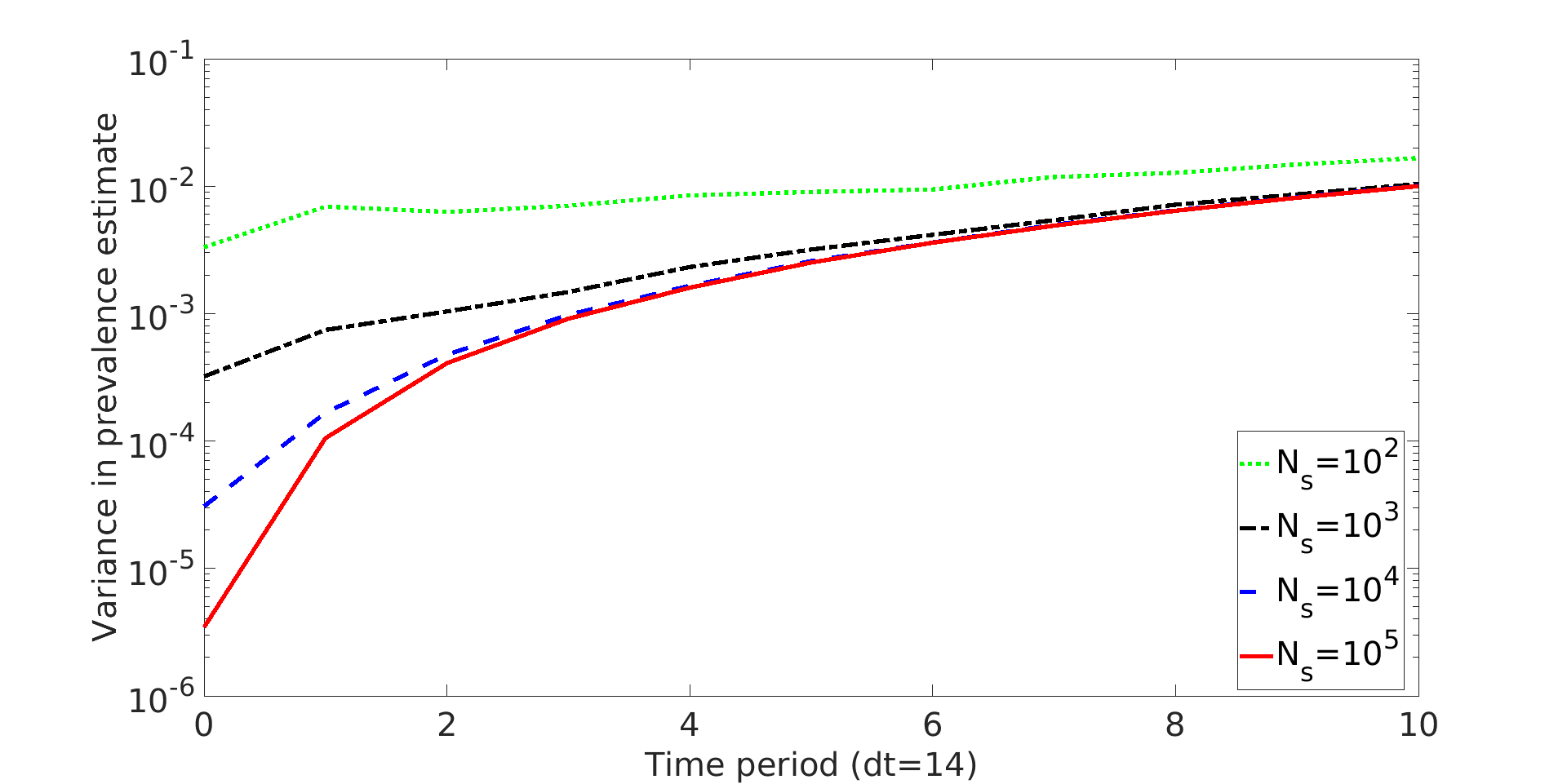}
\caption{Variance over 1000 synthetic sets of prevalence estimates, underlying true constant $f$, defined by \eqref{eq:fconst}. The variance of prevalence estimates for various $N_s$ number of samples per time period are plotted: $N_s=100$ (Green, dotted), $N_s = 1000$ (Black, dotted-dashed), $N_s =10^4$ (Blue, dashed), $N_s = 10^5$ (Red, solid).}
\label{fig:constant_var}
\end{figure}

For a constant change in prevalence per time period, i.e.\ \begin{equation} \label{eq:fconst}
f(i) = 0.01, \quad i \in \{0,1,\cdots,10\},
\end{equation}
Figure \ref{fig:constant_mean} shows the  mean of prevalence estimates over time for various values of $N_s$. The errorbars show the associated variances. As expected, the mean prevalence estimates matches the true underlying prevalence for large enough $N_s$. Due to accumulation of errors over time, the variances for a given $N_s$ increase with time as indicated by the size of the errorbars and the graph of the variances in Figure \ref{fig:constant_var}.

Consider a sinusoidal change in prevalence per time period, i.e.\ \begin{equation} \label{eq:fwave}
f(0) = 0.01, \quad f(i) = 0.01 \sin\left( \frac{(i-1) \pi}{10}\right), \quad i \in \{1,\cdots,10\}.
\end{equation}
This $f$ emulates a wave of infections in a pandemic. Figure \ref{fig:wave_mean}  displays the unbiasedness of the estimates by using synthetic data from this wave. The variance for the estimates unsurprisingly increases with time (Figure \ref{fig:wave_var}); the errors in estimation accumulate over time. Even so, using a larger number of samples per time period helps decrease this the effect to a certain degree.

\begin{figure}
\includegraphics[width = \textwidth]{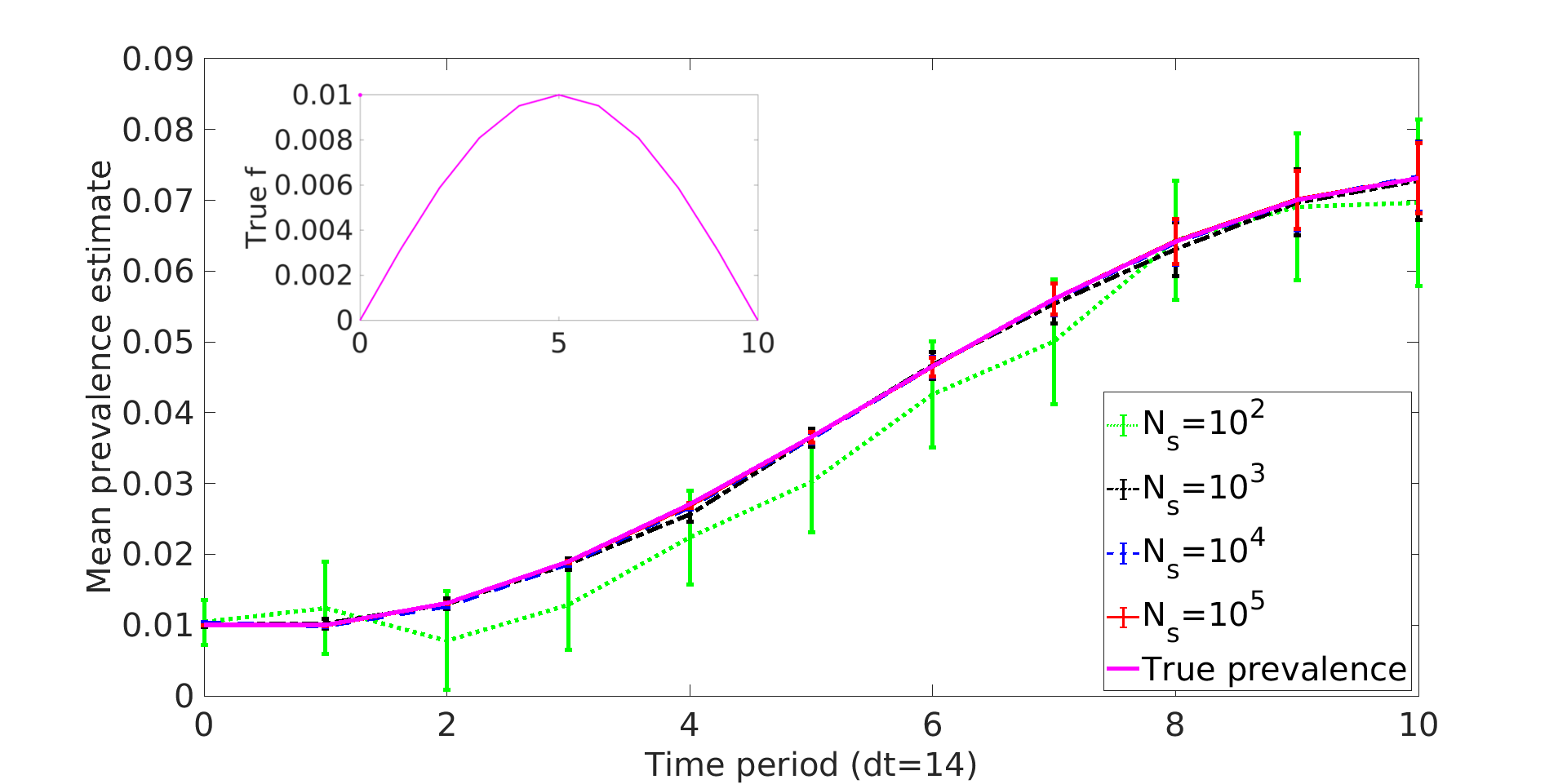}
\caption{Mean over 1000 synthetic sets of prevalence estimates. The inset shows the underlying true sinusoidal $f$, defined by \eqref{eq:fwave}. The mean prevalence estimates for various $N_s$ number of samples per time period are plotted with variance errorbars over time: $N_s=100$ (Green, dotted), $N_s = 1000$ (Black, dotted-dashed), $N_s =10^4$ (Blue, dashed), $N_s = 10^5$ (Red, solid), True prevalence (Magenta).}
\label{fig:wave_mean}
\end{figure}

\begin{figure}
\includegraphics[width = \textwidth]{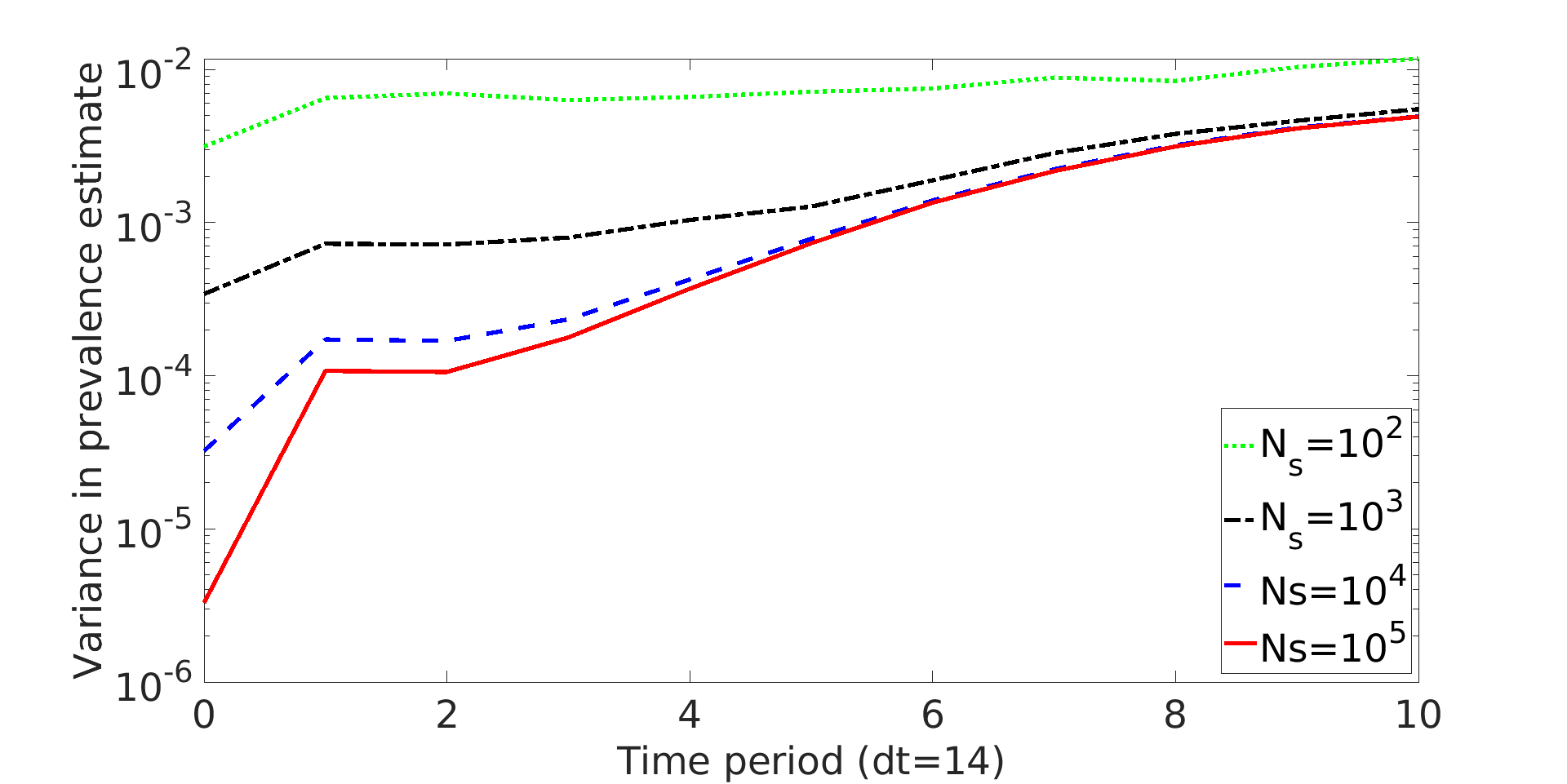}
\caption{Variance over 1000 synthetic sets of prevalence estimates, underlying true sinusoidal $f$, defined by \eqref{eq:fwave}. The variance of prevalence estimates for various $N_s$ number of samples per time period are plotted: $N_s=100$ (Green, dotted), $N_s = 1000$ (Black, dotted-dashed), $N_s =10^4$ (Blue, dashed), $N_s = 10^5$ (Red, solid).}
\label{fig:wave_var}
\end{figure}

\subsection{Optimal Classification Domains with Synthetic Data} \label{subsec:optsynth}Once the prevalence has been estimated, we calculate the optimal classification domains using the scheme developed in subsection \ref{subsec:optimal}. We investigate the effect of different underlying pandemic prevalences on these time-dependent classification domains using different known $f$ to demonstrate the behavior of the classification boundary over time.

In Figure \ref{fig:scatterimpulse}, we plot the classification domains for the case in which $f(T)$ is an impulse, i.e., 
\begin{equation} \label{eq:fimpulse}
f(0) = 0.01, \quad f(t) = 0, t>0.
\end{equation}
This figure illustrates optimal classification domains if a fraction of the population is infected on day $0$, but no additional infections occur. The optimal classification boundary still changes with time due to varying antibody levels associated with the personal timeline of infection. For a fixed day in the absolute timeline, the boundary between the positive and the negative domains is the antibody measurement threshold for classification on that day. \emph{Notice that this boundary varies in time, and as a result, the same antibody measurements on different days can be classified differently.} We explore the justification and interpretation of this in subsection \ref{subsec:interpret}.

The steep change in the optimal classification boundary at $t=0$ also reconciles the extreme example considered earlier. Before a pandemic starts, the boundary between positive and negative samples is set at a high value, as every sample is classified as negative. This boundary quickly falls when we start labeling samples as positive with changing time.  

The plot highlights quantitative importance of time dependence. Notice that the difference between the values with the maximum and the minimum between day $10$ and $300$ are as high as $1.25$. When translated to original measurement data, the relative change is even higher. The standard classification scheme that does not take time dependence into account might thus potentially misclassify a large number of samples.

The location of the optimal classification boundary in the case of constant prevalence as in \eqref{eq:fimpulse} changes solely due to the changing probability density functions with time. At very small times, the boundary is still at a large antibody value, as there is no way to meaningfully separate the positive samples from the negative samples as seen in Figure \ref{fig:posnegcomb}. However, as time progresses, the optimal boundary is determined by the intersection of prevalence weighted positive and negative probability densities. The shape parameter for the gamma distribution of our model for the positive response also determines the location of the optimal boundary.

\begin{figure}
\includegraphics[width = \textwidth]{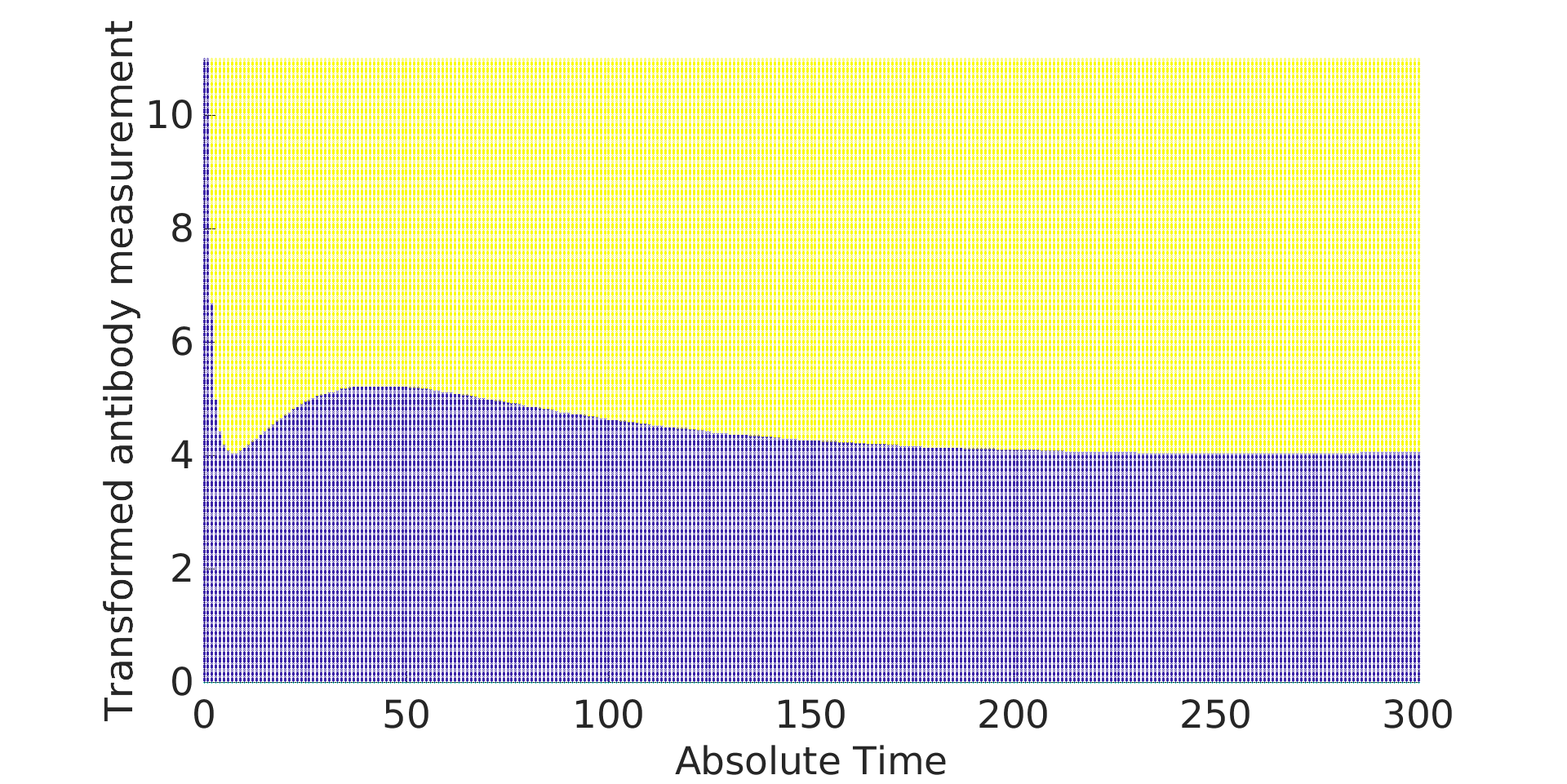}
\caption{Optimal classification partition of the domain $\Omega\times\{0,1,\cdots,300\}$ under an impulse given in \eqref{eq:fimpulse}. The yellow represents positively classified values and the blue represents negatively classified values.}
\label{fig:scatterimpulse}
\end{figure}

\begin{figure}
\includegraphics[width = \textwidth]{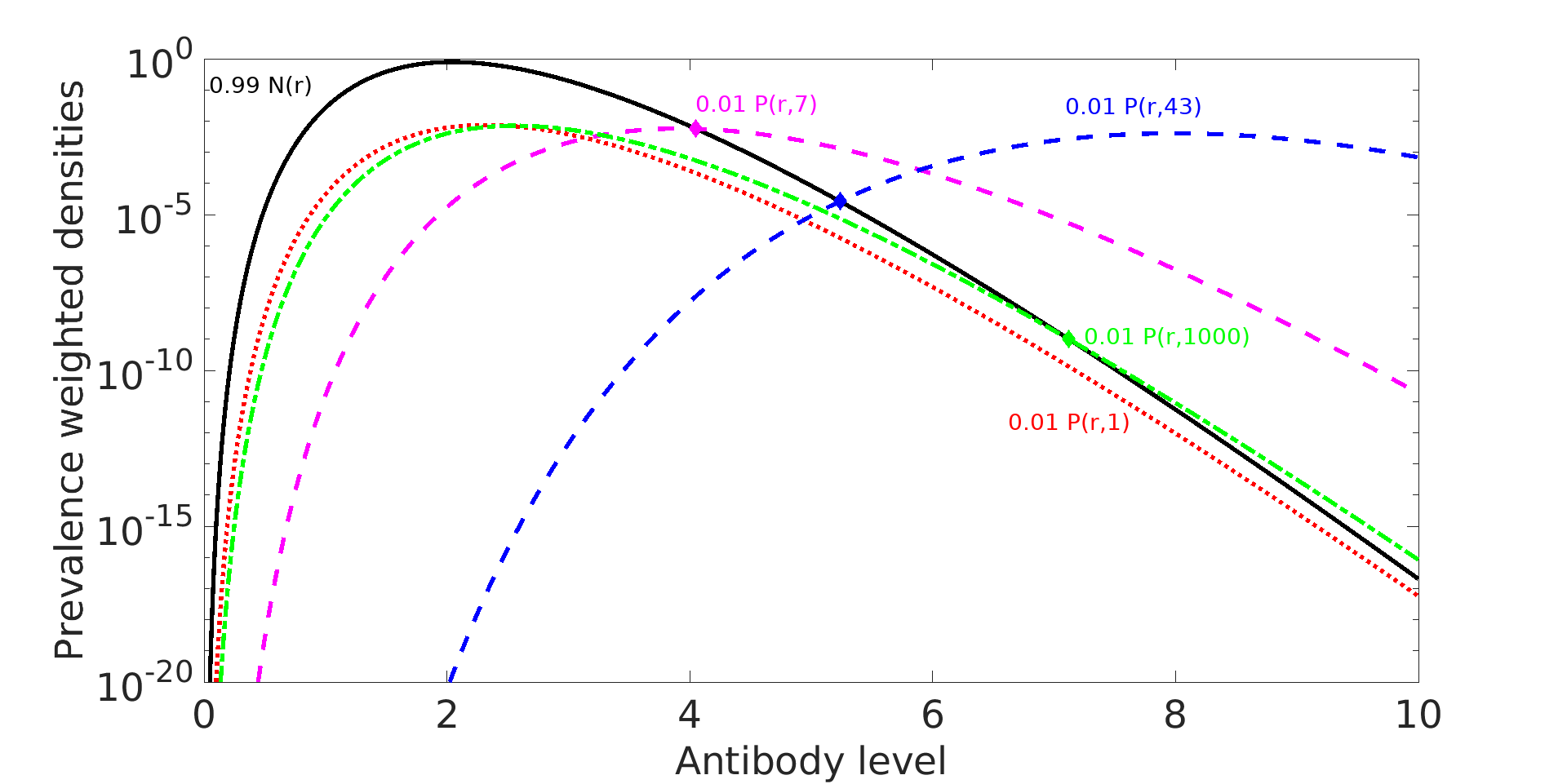}
\caption{Optimal classification partition of the domain $\Omega$ under an impulse given in \eqref{eq:fimpulse}. The black solid line is the prevalence weighted density for the negative samples. The prevalence weighted density for the positive samples at $T=1$ (red, dotted), $T=7$ (pink, dashed), $T=43$ (blue, dashed), $T=1000$ (green, dotted-dashed). $T=1$ and $T=1000$ demonstrate the behavior at the beginning and asymptotically. $T=7$ and $T=43$ are chosen as they are the local minima and local maxima of the boundary. }
\label{fig:posnegcomb}
\end{figure}

\begin{figure}
\includegraphics[width = \textwidth]{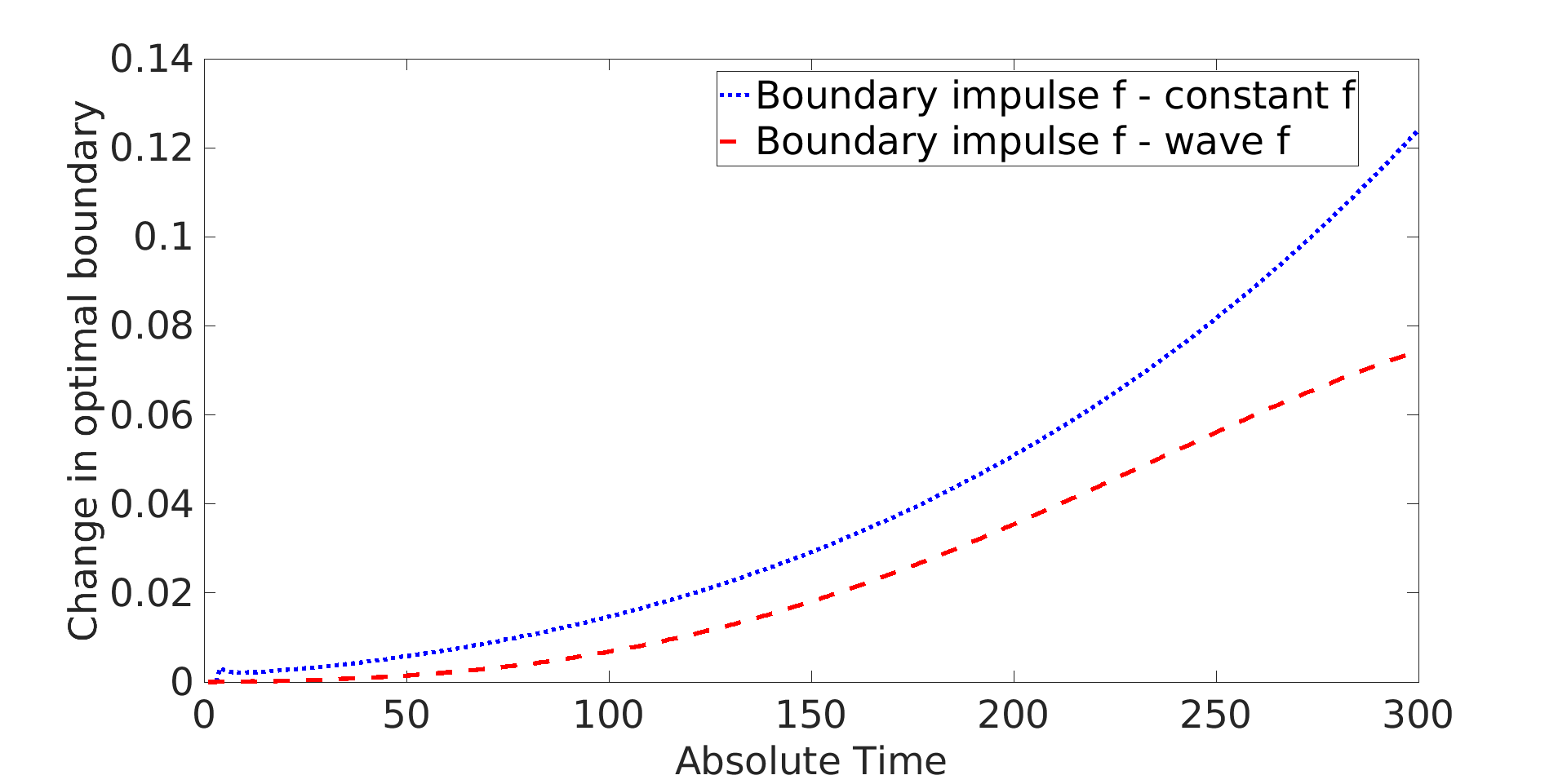}
\caption{Difference in optimal classification boundaries for impulse $f$ from that of constant $f$ (blue, dotted) and from that of wave $f$ (red, dashed). All $f$ have the same scale, $0.001$.}
\label{fig:diffffimp}
\end{figure}

\begin{figure}
\includegraphics[width = \textwidth]{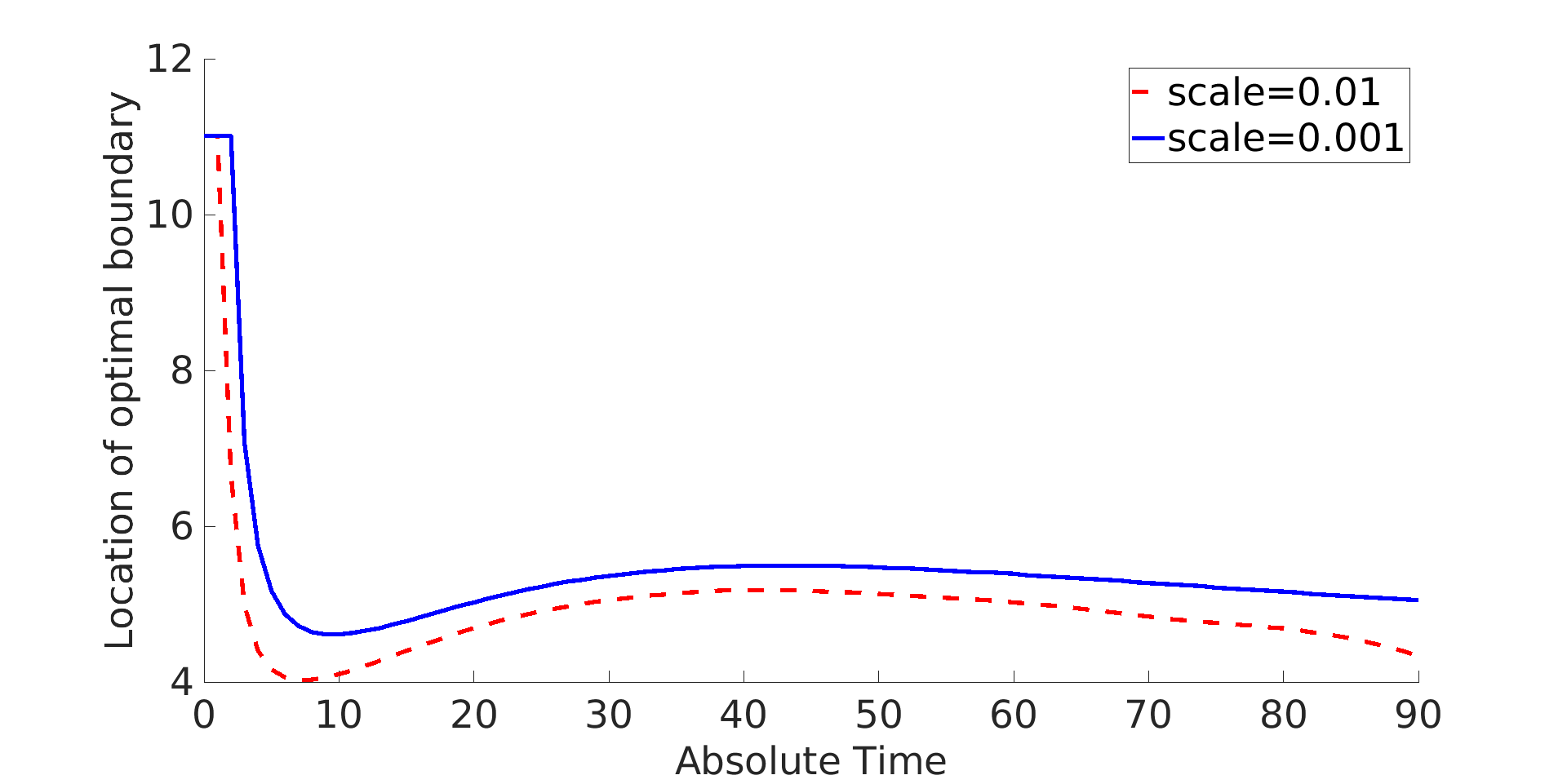}
\caption{Location of optimal classification boundaries for same $f$ scaled to different magnitudes. Scale $0.01$ (red, dashed) and 0.001 (blue, solid).}
\label{fig:diffscale}
\end{figure}

\begin{figure}
\includegraphics[width = \textwidth]{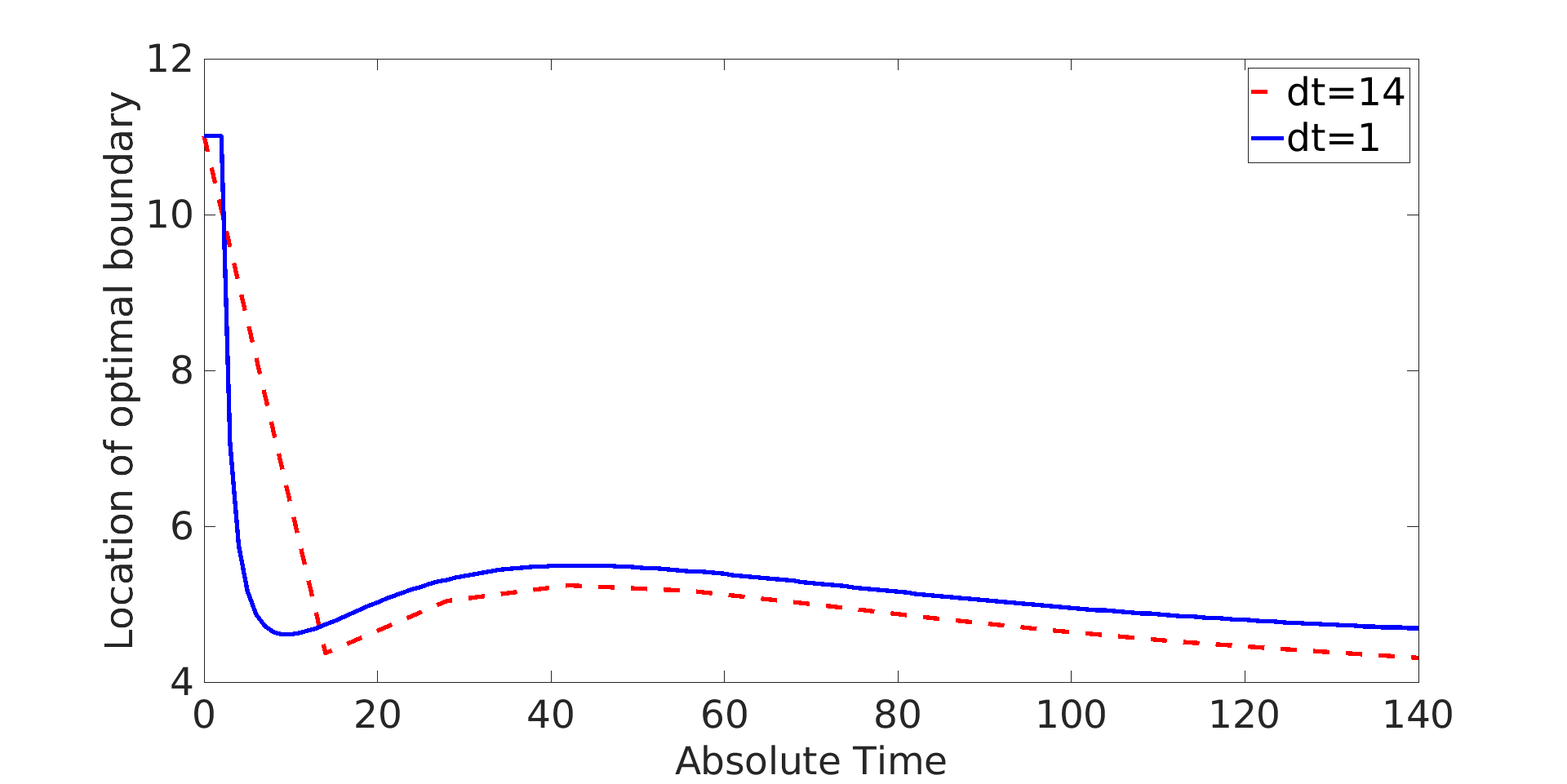}
\caption{Location of optimal classification boundary with different time discretizations. $dt = 14$ (red, dashed) and $dt = 1$ (blue, solid).}
\label{fig:diffdt}
\end{figure}

Even for different rates of infection $f$, the shape of the classification boundary can be qualitatively similar, i.e.\ the shape of the optimal boundary over time depends largely on $R(\mathbf{r},t)$ and $N(\mathbf{r})$ as demonstrated in Figure \ref{fig:diffffimp}. However, different magnitude scales of $f$ for the same type of function have boundaries that are different from each other as demonstrated in Figure \ref{fig:diffscale}. The plot shows the classification boundaries for two such $f$. Notice that the higher the prevalence at a given time, the lower the classification boundary at that time. To gain intuition, one can observe that for the extreme case of zero prevalence for all times (the absence of a pandemic), the classification boundary stays at the largest possible $r$. This is because the optimal strategy is to classify every sample as negative irrespective of the measurement value.

Figure \ref{fig:diffdt} highlights the effects of picking different time discretizations. For a $140$ day period in the absolute timeline, with
\begin{equation}
f(i) = 0.01, \quad i\in \{0,1,2,\cdots,140\},
\end{equation}
we show the optimal domain boundaries for $dt=1$ and $dt=14$. Notice that $dt=14$ provides a much coarser and less desirable approximation than $dt=1$. However, when data is sparse, taking $dt$ to be larger helps obtain a superior prevalence estimate. In practice, therefore, an attempt should be made to balance these opposing effects by gathering as much data as possible.
\section{Discussion and Conclusions}

\subsection{Continuous Time Extension}
\label{subsec:cts}
The time-discretized model described in subsections \ref{subsec:probmodel} and \ref{subsec:qest} can be viewed as the discretization of a continuous model with $\tilde{f}$ defined on $(-\infty,T]$. The corresponding values of $f(n)$ for the discrete case are
\begin{equation}f_0 = \int_{-\infty}^0 \tilde{f}(t) dt, \quad f(n) = \int_{n-1}^n \tilde{f}(t) dt,\end{equation}
with the prevalence at a given day $N$ being
\begin{equation}q(N) = \sum\limits_{n=0}^N f(n) = \int_{-\infty}^N \tilde{f}(t) dt.\end{equation}
Even in the continuous domain, $\tilde{f}$ is the change in prevalence. Similar to \eqref{eq:PrT}, conditional probability of positive samples is 
\begin{equation}\tilde{P}(\mathbf{r},T) = \int_{-\infty}^T \tilde{R}(\mathbf{r},t) \frac{\tilde{f}(T-t)}{\tilde{q}(T)} dt.\end{equation}
Analogous to the discrete case, one finds 
\begin{align}\tilde{Q}_P(T) & = N_P + \int_{-\infty}^T \left(R_P(T-t)-N_P\right) \tilde{f}(t) dt\\ & = N_P + (\tilde{f}\mathds{1}_{(-\infty,T]} * g) (T).\end{align}

Here, $*$ denotes continuous convolution, and $g(t) = R_P(t)-N_P$, meaning 
\begin{equation} (\tilde{f}\mathds{1}_{(-\infty,T]} * g) (T) = \tilde{Q}_P(T) - N_P.\end{equation}
Deconvolution provides a natural scheme for estimation of $\tilde{f}$. Care needs to be taken to explicitly define function spaces for $\tilde{f}$. Interestingly, this estimate cannot be obtained as a straightforward limit of \eqref{eq:fhatrec} as $\Delta t \rightarrow 0$ because in this limit, the denominator tends to zero.

\subsection{Interpretation} \label{subsec:interpret}
The essence of the work is that the time-dependent nature of antibody levels in an infected individual as well as the progression of a pandemic both change the day-to-day probabilities of a measurement being positive. As a result, the threshold value for whether a sample is classified as positive changes with time.

This observation implies that time $T$ and measurement $\mathbf{r}$ are both variables for classification, that is, time itself is elevated to the same standing as measurements $\mathbf{r}$. As we are not surprised that two different $\mathbf{r}$ values on the same day are classified differently, we need not be surprised that a particular $\mathbf{r}$ value on two different days $T$ can be classified differently. An extreme example mentioned earlier shows that the notion of time dependence is implicitly widely accepted; all samples are classified negative before pandemic begins irrespective  of the antibody measurement, which is no longer the case afterwards.

\subsection{Modeling Choices}
\label{subsec:modelingchoices}
The use of probability models allows one to leverage pre-existing knowledge of kinetic antibody profiles for infected individuals (Ortega et al., 2021\ \cite{ortega2021seven}, Qu et al., 2020 \cite{qu2020profile}). As a result, the probability density functions can be modeled beforehand, and the parameters for the same can be adjusted as a pandemic progresses and new training data is collected. This leads to a classification scheme with lower error rate (as defined in subsection \ref{subsec:optimal}) than a method based on $3\sigma$ confidence interval updated sporadically, or other non-probabilistic machine learning models. Explicit probability modeling also circumvents the assumption of Gaussian distribution inherent to the $3\sigma$ scheme.

This proposed method is assay and infection agnostic, in the sense that as long as the same assay is used to develop the response probability densities as the data to be classified, the analysis should be valid. Modeling for special circumstances could be an interesting extension. For instance, data concentrated at the limits of detection can be modeled with mixed probability densities involving Dirac delta distribution (Patrone et al., 2022 \cite{patrone2022optimal}). 

Our analysis can be used even in low prevalence settings, where traditional methods have trouble. The probability densities pertaining to antibody response ($R$) can be modeled given sufficient training data, even if the local prevalence is low. Notice that the estimate of $Q_P(T)$ is better with more data points, and that the estimate does not depend on the underlying prevalence. Limited testing capability can however hamper the prevalence estimation. In such a scenario, a sensible choice of $D_P$ helps achieve the prevalence estimate more quickly as outlined in Patrone and Kearsley, 2022 \cite{patrone2022minimizing}.

\subsection{Limitations and Future Work}\label{subsec:limitations}
The effects of vaccine-induced immunity and reinfection could be studied in detail and explicitly considered in the model. Different variants of a disease can generate different antibody response profiles. Moreover, protection provided by natural infection and immunization can decay over time. These considerations are beyond the scope of this manuscript and will be taken into account in future work once more data becomes available.

We did not use the days since infection as determined by PCR or the days since exposure to set relative time for a few reasons. First, it is extremely difficult to obtain data that accurately captures this information. Moreover, antibodies are formed after an initial delay when the immune system mounts its response against the virus. This delay depends on the class of antibody measured (IgG, IgM, IgA), the virus variant under consideration, the vaccination status, among other factors (Sethuraman et al., 2020\ \cite{sethuraman2020interpreting}, Muecksch et al., 2022 \cite{muecksch2022longitudinal}, Zhong et al., 2021 \cite{zhong2021durability}), and would be difficult to model effectively without robust longitudinal studies. However, days since symptom onset is a highly subjective quantity, and its use is not ideal. More thorough studies can help reduce the modeling error introduced with this choice.

Antibody production in an individual varies with a multitude of factors like age, sex, other diseases to name a few. These complex effects can be considered explicitly to expand this model in the future.

\subsection{Recommendations}
Data including the measurements and true classes for a large number of samples along with the true date of infection are not abundant.  As a remedy, we recommend undertaking longitudinal studies from the beginning of any pandemic to extract as much information as possible. Moreover, considering the low costs associated with saving numeric datasets, we recommend public health authorities to preserve such de-identified datasets with the highest granularity possible. Such training datasets can help in real time application of optimal estimation and classification schemes.

\section*{Acknowledgments}

This work is a contribution of the National Institute of Standards and Technology and is not subject to copyright in the United States. P.B. was funded through the NIST PREP grant 70NANB18H162. The aforementioned funder had no role in study design, data analysis, decision to publish, or preparation of the manuscript. The authors wish to thank Matthew DiSalvo and Rayanne Luke for useful discussions during preparation of this manuscript.

\section*{Research involving Human Participants
and/or Animals}

Use of data provided by Abela et al., 2021 \cite{abela2021multifactorial} has been reviewed and approved by the NIST Research Protections Office.

\section*{Data Availability}

Analysis scripts and data developed as a part of this work are available upon reasonable request. Original data are provided in Abela et al., 2021 \cite{abela2021multifactorial}.

\section*{Declarations of Competing Interests} None.

\appendix

\section{Unbiasedness of the Estimators}

\begin{lemma} \label{lem:QP}
For all $T \in \mathbb{N}, \widehat{Q}_P(T)$ is  an unbiased estimator of $Q_P(T)$. 
\end{lemma}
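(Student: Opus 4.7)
The plan is to prove unbiasedness directly from the definition of expectation, treating $\widehat{Q}_P(T)$ as a standard Monte Carlo average of an indicator function. The key observation is that, by the construction in subsection \ref{subsec:probmodel}, any sample $\mathbf{r}_i$ collected on day $T$ is drawn (independently) from the distribution with density $Q(\cdot, T)$, since $Q(\mathbf{r}, T)$ is precisely the unconditional probability density of observing measurement $\mathbf{r}$ on day $T$ (mixing positive and negative samples according to the true prevalence $q(T)$).

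First I would fix an arbitrary $T \in \mathbb{N}$ and an arbitrary index $i \in \{1, \ldots, S\}$. Because $\mathbf{r}_i \sim Q(\cdot, T)$ on $\Omega$, the expectation of the indicator $\mathds{1}_P(\mathbf{r}_i)$ can be written as an integral against $Q(\cdot, T)$ over $\Omega$; restricting to the support of the indicator reduces this to the integral of $Q(\cdot, T)$ over $D_P$, which is exactly $Q_P(T)$ by definition \eqref{eq:QP}. Symbolically,
\begin{equation}
\mathbb{E}[\mathds{1}_P(\mathbf{r}_i)] = \int_\Omega \mathds{1}_P(\mathbf{r})\, Q(\mathbf{r}, T)\, d\mathbf{r} = \int_{D_P} Q(\mathbf{r}, T)\, d\mathbf{r} = Q_P(T).
\end{equation}

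Next I would invoke linearity of expectation applied to the definition \eqref{eq:QPestimate}: the expectation of the sample mean of $S$ terms, each having expectation $Q_P(T)$, equals $Q_P(T)$ itself. This closes the argument, since the conclusion $\mathbb{E}[\widehat{Q}_P(T)] = Q_P(T)$ holds for every $T$.

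Honestly, I do not anticipate a substantive obstacle: this is the classical unbiasedness-of-the-empirical-mean result for a bounded-indicator integrand, and the cited reference Caflisch, 1998 already covers it. The only subtlety worth flagging in the write-up is the distributional assumption on the $\mathbf{r}_i$, namely that the samples observed on day $T$ are genuinely drawn from $Q(\cdot, T)$ rather than from some conditional (e.g.\ only-positives) distribution. Making this assumption explicit at the start of the proof is what licenses the first displayed computation, and I would therefore state it plainly before any calculation.
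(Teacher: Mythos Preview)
Your proposal is correct and is essentially the same argument as the paper's: both compute $\mathbb{E}[\mathds{1}_P(\mathbf{r}_i)] = Q_P(T)$ for each sample and then pass to the average, the only cosmetic difference being that the paper phrases the per-sample indicator as a Bernoulli$(Q_P(T))$ variable and the sum as $\frac{1}{S}\,\mathrm{Binomial}(S,Q_P(T))$, whereas you write out the integral and invoke linearity of expectation directly. Your explicit statement of the distributional assumption $\mathbf{r}_i \sim Q(\cdot,T)$ is a welcome clarification.
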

\begin{proof}
The indicator for each sample value $\mathbf{r}_i$ in \eqref{eq:QPestimate} is distributed as Bernoulli with the mean $Q_P(T)$. Adding $S$ such i.i.d. distributions and using  \eqref{eq:QPestimate}, we say that 
\begin{equation*}
\widehat{Q}_P(T) \sim \frac{1}{S} \text{Binomial}(S,Q_P(T)),
\end{equation*}
as the probability of a randomly selected sample on day $T$ to be in the set $D_P$ is $Q_P(T)$. Therefore,
\begin{equation*}
E(\widehat{Q}_P(T)) =  \frac{1}{S} E\left(\text{Binomial}(S,Q_P(T))\right) = Q_P(T).
\end{equation*}

\end{proof}
\begin{theorem}
The estimators in equations \eqref{eq:festimate0}-\eqref{eq:qestimate} are unbiased.
\end{theorem}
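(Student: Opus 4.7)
The plan is to prove unbiasedness of $\widehat{f}(t)$ by induction on $t$, leveraging Lemma \ref{lem:QP} and linearity of expectation, and then deduce unbiasedness of $\widehat{q}(T)$ as an immediate corollary since it is just a finite sum of the $\widehat{f}(t)$.

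For the base case $t=0$, I would observe that $\widehat{f}(0)=\beta_P(1)=(\widehat{Q}_P(1)-N_P)/(R_P(1)-N_P)$ is an affine function of the single random quantity $\widehat{Q}_P(1)$, with $N_P$ and $R_P(1)$ treated as known constants (they are integrals against the assumed probability models). Lemma \ref{lem:QP} gives $E[\widehat{Q}_P(1)]=Q_P(1)$, and equation \eqref{eq:QP} at $T=1$ reduces to $Q_P(1)-N_P=f(0)(R_P(1)-N_P)$ because the sum has only the $t=0$ term. Combining these two facts yields $E[\widehat{f}(0)]=f(0)$.

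For the inductive step, assume $E[\widehat{f}(t)]=f(t)$ for all $t\le T-2$. From the recurrence $\widehat{f}(T-1)=\beta_P(T)-\sum_{t=0}^{T-2}\widehat{f}(t)\alpha_P(T-t)$, linearity of expectation together with Lemma \ref{lem:QP} applied to $\widehat{Q}_P(T)$ gives
\begin{equation*}
E[\widehat{f}(T-1)]=\frac{Q_P(T)-N_P}{R_P(1)-N_P}-\sum_{t=0}^{T-2}f(t)\,\frac{R_P(T-t)-N_P}{R_P(1)-N_P}.
\end{equation*}
Isolating the $t=T-1$ term in the identity \eqref{eq:QP}, namely $Q_P(T)-N_P=\sum_{t=0}^{T-1}f(t)(R_P(T-t)-N_P)$, I would solve for $f(T-1)(R_P(1)-N_P)$ and divide by $R_P(1)-N_P$; the resulting expression matches the right-hand side displayed above, closing the induction and giving $E[\widehat{f}(T-1)]=f(T-1)$.

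Finally, since $\widehat{q}(T)=\sum_{t=0}^T\widehat{f}(t)$ is a finite linear combination, linearity of expectation yields $E[\widehat{q}(T)]=\sum_{t=0}^T f(t)=q(T)$, completing the proof. The only mild subtlety is keeping track of what is treated as deterministic: the coefficients $\alpha_P(\cdot)$ and the constants $N_P,R_P(\cdot)$ come from the assumed probability models and are not random, so the recurrence is genuinely affine in the $\widehat{Q}_P(\cdot)$. Provided that is kept clear, the argument is a routine induction and there is no serious obstacle; the main care needed is matching the index $T-1$ on the left with the splitting off of the $t=T-1$ term on the right in \eqref{eq:QP}.
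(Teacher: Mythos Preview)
Your proposal is correct and follows essentially the same approach as the paper's own proof: an induction on $t$ for the unbiasedness of $\widehat{f}(t)$ using Lemma~\ref{lem:QP} and linearity of expectation, followed by the immediate corollary for $\widehat{q}(T)$. Your write-up is, if anything, slightly more careful than the paper's in spelling out why the coefficients $\alpha_P(\cdot)$, $N_P$, and $R_P(\cdot)$ are deterministic and in explicitly matching the inductive expression against the $t=T-1$ term of \eqref{eq:QP}.
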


\begin{proof}
Taking expectation of both sides for \eqref{eq:festimate0},
\begin{equation*}E\left(\widehat{f}(0)\right) = E\left(\frac{\widehat{Q}_P(1)-N_P}{R_P(1)-N_P}\right) =\frac{E\left(\widehat{Q}_P(1)\right)-N_P}{R_P(1)-N_P}.
\end{equation*}

We use Lemma \ref{lem:QP} to say,
\begin{equation*}E(\widehat{f}(0)) = f(0).
\end{equation*}

Using induction,
\begin{align*}
E\left(\widehat{f}(T-1)\right) & = \frac{E\left(\widehat{Q}_{P}(T)\right) - N_{P}}{R_{P}(1)-N_{P}} - \sum\limits_{t=0}^{T-2} E\left(\widehat{f}(t)\right) \frac{R_{P}(T-t)-N_{P}}{R_{P}(1)-N_{P}}\\
& = \frac{Q_{P}(T) - N_{P}}{R_{P}(1)-N_{P}} - \sum\limits_{t=0}^{T-2} f(t) \frac{R_{P}(T-t)-N_{P}}{R_{P}(1)-N_{P}}\\
& = f(T-1)
\end{align*}
and 
\[E\left(\widehat{q}(T)\right) = E\left(\sum\limits_{t=0}^T \widehat{f}(t) \right) = q(T).\]
\end{proof}

\bibliographystyle{plain} 
\bibliography{refs} 
\end{document}